\newtheorem{theorem}{Theorem}
\newtheorem{proposition}{Proposition}
\newtheorem{lemma}{Lemma}
\theoremstyle{definition}
\newtheorem{definition}{Definition}
\theoremstyle{remark}
\newtheorem*{remark}{Remark}
\DeclareMathOperator{\tr}{tr}
\DeclareMathOperator{\poly}{poly}
\begin{document}

\begin{CJK*}{UTF8}{}

\title{Is microcanonical ensemble stable?}

\CJKfamily{gbsn}

\author{Yichen Huang (黄溢辰)\thanks{We acknowledge funding provided by the Institute for Quantum Information and Matter, an NSF Physics Frontiers Center (NSF Grant PHY-1125565) with support of the Gordon and Betty Moore Foundation (GBMF-2644).}\\
California Institute of Technology, Pasadena, California 91125, USA\\
ychuang@caltech.edu
}

\maketitle

\end{CJK*}

\begin{abstract}

No, in a rigorous sense specified below.

\end{abstract}

\section{Introduction}

For the purpose of this work, it suffices to work with a chain of $n$ spins (qu\emph{d}its), each of which has local dimension $d=\Theta(1)$. We are given a local Hamiltonian $H=\sum_{j=1}^{n-1}H_j$ with open boundary conditions, where $\|H_j\|=O(1)$ acts on the spins $j$ and $j+1$ (nearest-neighbor interaction). Since the standard bra-ket notation can be cumbersome, in most but not all cases quantum states and their inner products are simply denoted by $\psi,\phi,\ldots$ and $\langle\psi,\phi\rangle$, respectively, cf. $\||\psi\rangle-|\phi\rangle\|$ versus $\|\psi-\phi\|$. Let $\psi_1,\psi_2,\ldots,\psi_{d^n}$ be the eigenstates of $H$ with the corresponding eigenvalues $E_1\le E_2\le\cdots\le E_{d^n}$ in non-descending order. The projector onto the energy window $[E-\delta,E+\delta]$ is given by
\begin{equation}
P(E,\delta)=\sum_{j:|E_j-E|\le\delta}|\psi_j\rangle\langle\psi_j|.
\end{equation}

A microcanonical ensemble is a fundamental concept in statistical mechanics. Throughout this paper, we only consider the physical situation that the bandwidth is (at most) a constant.

\begin{definition} [microcanonical ensemble]
An (exact) microcanonical ensemble of energy $E$ and bandwidth $2\Delta_e=O(1)$ is the set
\begin{equation} \label{ext}
EXT=\{\psi:\psi=P(E,\Delta_e)\psi\}.
\end{equation}
\end{definition}

The state in practice may well only be approximately rather than exactly in a microcanonical ensemble. A state is in an approximate microcanonical ensemble if the population ``leakage'' outside a distance (in the spectrum) from the target energy is exponentially small in the distance.

\begin{definition} [approximate microcanonical ensemble]
An approximate microcanonical ensemble of energy $E$ and bandwidth $2\Delta_a=O(1)$ is the set
\begin{equation} \label{apx}
APX=\{\phi:|\langle\phi,P(E,x)\phi\rangle|\ge1-O(e^{-x/\Delta_a}),\forall x\ge0\}.
\end{equation}
\end{definition}

The stability of a microcanonical ensemble can be phrased as follows. Suppose a microcanonical ensemble has a universal physical property in the mathematical sense of an inequality satisfied by all states in $EXT$. Is this inequality valid (possibly up to small corrections) for all states in $APX$? If not, the physical property of the microcanonical ensemble is not robust against perturbations.

One might tend to believe that a microcanonical ensemble is stable due to a continuity argument. Given $\phi\in APX$, let $\psi=P(E,\Delta_e)\phi/\|P(E,\Delta_e)\phi\|$ so that $\psi\in EXT$ and $|\langle\psi,\phi\rangle|\ge1-O(e^{-\Delta_e/\Delta_a})$. For $\Delta_a\ll\Delta_e=O(1)$, the states $\psi,\phi$ are close to each other, and thus believed to behave similarly. The pitfall of this hand-waving argument is that $\psi,\phi$ differ only by a small constant, which has the potential of affecting the physics significantly.\footnote{For example, a generic state in a small-constant neighborhood of a product state has volume law for entanglement. The stability of area law for entanglement can be proved, but only if in the presence of additional structure.} Therefore, the continuity argument (if not combined with more sophisticated reasonings) does not immediately lead to the stability of a microcanonical ensemble.

We show that a microcanonical ensemble is unstable from an entanglement point of view.

\begin{definition} [entanglement entropy]
The Renyi entanglement entropy $R_\alpha(0<\alpha<1)$ of a bipartite (pure) quantum state $\rho_{AB}=|\psi\rangle\langle\psi|$ is defined as
\begin{equation}
R_\alpha(\psi)=(1-\alpha)^{-1}\log\tr\rho_A^\alpha,\quad\rho_A=\tr_B\rho_{AB},
\end{equation}
where $\rho_A$ is the reduced density matrix. The von Neumann entanglement entropy is defined as
\begin{equation}
S(\psi)=-\tr(\rho_A\log\rho_A)=\lim_{\alpha\rightarrow1^-}R_\alpha(\psi).
\end{equation}
\end{definition}

\begin{remark}
For fixed $\psi$, the Renyi entanglement entropy $R_\alpha$ is a non-increasing function of $\alpha$.
\end{remark}

We consider the evolution of entanglement entropy across a particular cut.

\begin{definition} [dynamical entanglement scaling exponent]
Suppose the state $\psi_0$ at time $t=0$ has bond dimension $D_0$ across the cut. Let $z$ be a nonnegative number such that
\begin{equation}
R_\alpha(e^{-iHt}\psi_0)\le\log D_0+O(t^z\poly\log t),~\forall t.
\end{equation}
\end{definition}

\begin{remark}
On the right-hand side, the first term is an upper bound on the entanglement of the initial state. Note that $D_0$ is allowed to grow (even exponentially, e.g., $D_0=d^{n/100}$) with the system size. The second term, which involves polylogarithmic corrections due to a technical reason, characterizes the growth of entanglement. 
\end{remark}

Traditional Lieb-Robinson techniques imply a universal bound $z\le1$ for arbitrary initial states. This bound can (cannot) be improved for states in an exact (approximate) microcanonical ensemble.

\begin{theorem} \label{diffusive}
For any initial state $\psi_0\in EXT$, we have $z\le1/2$, and this bound is tight.
\end{theorem}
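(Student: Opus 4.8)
The plan is to prove $z\le1/2$ by expanding the dynamics in powers of the single bond crossing the cut and showing that \emph{exact} confinement to an $O(1)$-wide energy shell suppresses the high-order terms diffusively, and to prove tightness with an explicit diffusive model.

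\textbf{Setup and reduction.} Fix the cut at a site $k$ and split $H=H_0+H_k$ with $H_0=H_A+H_B$, where $H_A$ (resp.\ $H_B$) collects the bonds strictly inside the left (resp.\ right) part and $H_k$ is the lone crossing bond. Since $EXT$ is invariant under $e^{-iHt}$ and $\psi_0\in EXT$ forces the moment bounds $\|(H-E)^m\psi_0\|\le\Delta_e^m$ for all $m$, the state $e^{-iHt}\psi_0$ is approximated to within $e^{-\Omega(\Lambda)}$ by $q_\Lambda(H)\psi_0$ for a polynomial $q_\Lambda$ of degree $\Lambda\asymp t+\log(1/\epsilon)$ (Taylor/Chebyshev truncation of $e^{-ixt}$ on the shell). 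Expanding $q_\Lambda(H)$ in powers of $H_k$ about $H_0$, write $e^{-iHt}\psi_0=\sum_{r\ge0}\Phi_r\psi_0+(\text{error})$, where $\Phi_r$ collects the terms of order $r$ in $H_k$. Two structural facts: (a) each $\Phi_r\psi_0$ has bond dimension at most $D_0\,d^{O(r\log t)}$ across the cut, because every $H_0$ factor is a sum of tensor products across the cut and raises the bond dimension only polynomially in the degree, while each $H_k$ raises it by at most $d^2$; (b) what remains is to control the tail $\sum_{r>R}\|\Phi_r\psi_0\|$.

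\textbf{The key estimate and how it closes the argument.} The naive Dyson bound $\|\Phi_r\psi_0\|\le(t\|H_k\|)^r/r!$ only decays once $r\gtrsim t$, reproducing $z\le1$. The improvement is diffusive: because the state sits in a shell of width $O(1)$, each application of $H_k$ briefly ejects it from the shell and it is pulled back, so in a degree-$m$ term the ``cut-active'' part behaves like the excursions of a one-dimensional walk of length $m$ and has total size only $O(\sqrt m)$. Feeding the full moment hierarchy $\|(H-E)^m\psi_0\|\le\Delta_e^m$ into the expansion should therefore give $\|\Phi_r\psi_0\|\lesssim (C\sqrt t)^r/r!$ — the effective number of boundary operations is $\sqrt t$, not $t$. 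This decays superexponentially for $r\gg\sqrt t$, so $\sum_{r>R}\|\Phi_r\psi_0\|$ is superpolynomially small in $t$ once $R\asymp\sqrt t\,\poly\log t$, and hence $\rho_A(t)$ is, up to a superpolynomially small trace-distance error, a state of bond dimension $D_0\,d^{O(\sqrt t\,\poly\log t)}$. For the von Neumann entropy this gives $S(\psi(t))\le\log D_0+O(\sqrt t\,\poly\log t)$ via Fannes' inequality. For $R_\alpha$ with $0<\alpha<1$, where a long thin Schmidt tail is the real danger, I would block-decompose the spectrum of $\rho_A(t)$ according to $r$ and bound $\tr\rho_A(t)^\alpha\lesssim\sum_{r}(D_0 d^{O(r\log t)})^{1-\alpha}\big(\sum_{r'>r}\|\Phi_{r'}\psi_0\|\big)^{\alpha}$; with the superexponential tail this sum is dominated by $r\asymp\sqrt t$, pinning $R_\alpha(e^{-iHt}\psi_0)\le\log D_0+O(\sqrt t\,\poly\log t)$, i.e.\ $z\le1/2$.

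\textbf{Main obstacle and tightness.} The hard step is the $\sqrt t$ in $\|\Phi_r\psi_0\|\lesssim(C\sqrt t)^r/r!$: this is precisely where exactness of $EXT$ is used — for $\phi\in APX$ the moment bound degrades to $\|(H-E)^m\phi\|\lesssim\sqrt{(2m)!}\,\Delta_a^m$, which destroys the diffusive gain and reinstates ballistic ($z=1$) growth, and that is exactly why the stability statement fails for $APX$. Making the ``ejection and return'' picture rigorous will mean organizing the sum defining $\Phi_r$ so that the shell constraint forces the cancellations, while tracking the $\poly(m)$ combinatorial factors that produce the polylog. For tightness I will take $H$ with genuinely diffusive energy transport, so that the operator $e^{iHt}H_Ae^{-iHt}-H_A$ carrying energy across the cut has fluctuations of order $\sqrt t$ on the shell, and choose $\psi_0\in EXT$ of modest bond dimension whose entanglement across the cut is forced to track this transported energy; a Lieb--Robinson-type lower bound — or a direct computation in an exactly solvable quadratic model — then yields $R_\alpha(e^{-iHt}\psi_0)\ge\log D_0+\Omega(\sqrt t)$ along a sequence of times, so the exponent $1/2$ is optimal.
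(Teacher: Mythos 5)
Your first step (truncating $e^{-iHt}\psi_0$ at polynomial degree $g=O(t)$ using the moment bounds $\|(H-E)^k\psi_0\|\le\Delta_e^k$ that follow from $\psi_0\in EXT$) is exactly the paper's first step. But the estimate on which everything else rests, $\|\Phi_r\psi_0\|\lesssim(C\sqrt t)^r/r!$, is not only unproven — it is false in general. Take $[H_0,H_k]=0$ (e.g., a purely classical Ising chain, where every $z$-basis product state is an exact eigenstate and hence lies in $EXT$): then $\Phi_r\psi_0=e^{-iH_0t}(-itH_k)^r\psi_0/r!$, whose norm is $(ct)^r/r!$ with $c=\Theta(1)$ whenever $H_k$ acts nontrivially on $\psi_0$. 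There is no $\sqrt t$ gain, even though the state sits exactly in an $O(1)$-wide shell: the shell condition controls moments of $H$, not of the crossing bond, and the ``ejection and return'' picture does not produce cancellations in the norms of individual Dyson orders. The true source of the $\sqrt{\phantom{g}}$ is not a norm statement about the expansion in powers of $H_k$ at all. The paper invokes Lemma 4.2 of Arad--Kitaev--Landau--Vazirani: for \emph{any} polynomial $p$ of degree $g$, the state $p(H)\psi_0$ has bond dimension at most $D_0e^{\tilde O(\sqrt g)}$ across the cut. That is a Schmidt-rank bound, not a tail bound; its proof introduces a buffer of $\ell$ sites around the cut and balances the cost $d^{O(\ell)}$ of the buffer against the $O(g/\ell)$ possible traversals of it by a length-$g$ word of local terms, optimized at $\ell\sim\sqrt g$. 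Your plan in effect assumes this hard lemma, and in a form that does not hold. (Your Renyi-entropy bookkeeping at the end would be fine once a correct truncation is in hand — the paper does the same — but the truncation level $\sqrt t$ cannot be reached by your route.)

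On tightness your proposal is a statement of intent rather than a construction: exhibiting a Hamiltonian with provably diffusive energy transport and converting that into a rigorous $\Omega(\sqrt t)$ entanglement lower bound is itself a hard open-ended problem. The paper avoids it entirely with a rescaling trick: for the critical transverse-field Ising chain, $S(e^{-iH_{\rm Is}t}\psi_0)$ saturates to $\Omega(n)$ by time $t=O(n)$ (Calabrese--Cardy); replacing $H_{\rm Is}$ by $H_{\rm Is}/n$ makes the bandwidth $O(1)$, so \emph{every} state is in $EXT$, and the same saturation now occurs at $t=O(n^2)$, which already violates $z\le1/2-\delta$ for every $\delta>0$. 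No diffusive dynamics needs to be established — ballistic growth under a Hamiltonian whose overall scale has been shrunk by $n$ does the job.
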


\begin{proposition} \label{ballistic}
There is a Hamiltonian $H_{XX}$ and an initial state $\phi_0\in APX$ such that $z=1$.
\end{proposition}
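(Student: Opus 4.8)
The plan is to take $H_{XX}=\sum_{j=1}^{n-1}(X_jX_{j+1}+Y_jY_{j+1})$, which the Jordan--Wigner transformation maps to a nearest-neighbor free-fermion hopping chain, and $\phi_0$ close to the N\'eel state $N=|0101\cdots\rangle$ filtered to an $O(1)$-wide energy window about $E=0$: start from $\tilde\phi_0=e^{-H_{XX}^2/(2\sigma^2)}N/\mathcal N$ for a constant $\sigma$, with the cut at the center of the chain. I would establish (i) that $\tilde\phi_0$, and a bond-dimension-truncated version $\phi_0$ of it, lie in $APX$; and (ii) that $R_\alpha(e^{-iH_{XX}t}\phi_0)=\Omega(t)$ for $\log n\ll t\le\Omega(n)$. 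Granting the universal Lieb--Robinson bound $z\le1$, (ii) forces $z=1$: at a suitable $t^\ast=\Theta(n)$ one gets $R_\alpha(e^{-iH_{XX}t^\ast}\phi_0)-\log D_0=\Omega(n)$, which exceeds $O((t^\ast)^{z'}\poly\log t^\ast)=o(n)$ for every $z'<1$, so no exponent below $1$ is admissible. (Here $\log D_0$ must be kept extensively below the maximal Schmidt entropy $\tfrac n2\log d$, or the claimed inequality would hold trivially with $z=0$; this is the reason $\phi_0$ is truncated.)

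For (i): the energy distribution of $N$ under $H_{XX}$ has mean $0$ and variance $4(n-1)$ and, by a local central limit estimate for the sum of the weakly correlated local terms $H_j$ on the product state $N$, is approximately Gaussian of width $\Theta(\sqrt n)$ with density $\Theta(1/\sqrt n)$ near the center. Reweighting by $e^{-E^2/\sigma^2}$ and renormalizing then yields essentially $\mathcal N(0,\sigma^2/2)$: the $\Theta(1/\sqrt n)$ density cancels between numerator and denominator of the leakage $\langle\tilde\phi_0,(I-P(0,x))\tilde\phi_0\rangle=\sum_{|E_m|>x}e^{-E_m^2/\sigma^2}|c_m|^2/\sum_m e^{-E_m^2/\sigma^2}|c_m|^2$, so this leakage is $O(e^{-x^2/\sigma^2})\le O(e^{-x/\Delta_a})$ for a suitable constant $\Delta_a$, giving $\tilde\phi_0\in APX$. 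To keep $D_0$ below $\tfrac n2\log d$ I would use $e^{-H^2/(2\sigma^2)}=c\int e^{-\sigma^2s^2/2}e^{-iHs}\,ds$: truncate to $|s|\le O(\sqrt n)$, discretize, and approximate each short-time state $e^{-iHs}N$ by a matrix product state; the total error can be made $e^{-cn}$ while $\log D_0=\Theta(n)$, and since that $e^{-cn}$ leakage ``floor'' only needs to dominate $e^{-x/\Delta_a}$ for $x$ up to the spectral diameter $O(n)$, taking $\Delta_a$ a large enough (still $n$-independent) constant keeps $\phi_0\in APX$ while leaving $\tfrac n2\log d-\log D_0$ extensive.

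The crux is (ii): showing the global filter does not destroy the ballistic entanglement carried by the N\'eel quench (a priori a superposition of highly entangled states can be a product state). Since $R_\alpha$ is nonincreasing in $\alpha$, it suffices to bound $R_2$ from below, i.e.\ to show the purity $\tr\rho_A(t)^2$ of the left half of $e^{-iHt}\phi_0$ is $\poly(n)\cdot e^{-\Omega(t)}$. Writing $e^{-iHt}\tilde\phi_0=e^{-H^2/(2\sigma^2)}\psi_t/\mathcal N$ with $\psi_t:=e^{-iHt}N$ a free-fermion state, and expanding both replicas via the Gaussian integral, one gets $\tr\rho_A(t)^2=\frac{c^4}{\mathcal N^4}\int_{\mathbb R^4}(\prod_i ds_i)\,e^{-\sigma^2\sum_i s_i^2/2}\,\langle\psi_{t+s_1},\psi_{t+s_2}|\,\mathbb S_A\,|\psi_{t+s_3},\psi_{t+s_4}\rangle$, where $\mathbb S_A$ swaps the two copies of $A$ and $\mathcal N^2=\langle N|e^{-H^2/\sigma^2}|N\rangle=\Theta(1/\sqrt n)$. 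The integrand equals $\tr(\rho^{(3,1)}\rho^{(4,2)})$ with $\rho^{(j,i)}:=\tr_B(|\psi_{t+s_j}\rangle\langle\psi_{t+s_i}|)$; comparing Schmidt decompositions and applying Cauchy--Schwarz gives the elementary estimate $\|\rho^{(j,i)}\|_2^2\le\sqrt{\tr(\rho^{(i)})^2\,\tr(\rho^{(j)})^2}$, where $\rho^{(i)}=\tr_B|\psi_{t+s_i}\rangle\langle\psi_{t+s_i}|$, hence $|\langle\psi_{t+s_1},\psi_{t+s_2}|\mathbb S_A|\psi_{t+s_3},\psi_{t+s_4}\rangle|\le e^{-\frac14\sum_{i=1}^4 R_2(\psi_{t+s_i})}$. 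Now I would invoke the standard fact (Calabrese--Cardy quasiparticle picture, confirmed by the exact free-fermion solution) that the N\'eel quench in the XX chain has ballistic R\'enyi growth, $R_2(\psi_\tau)\ge c_0|\tau|-O(1)$ for $|\tau|\le\Omega(n)$. Substituting, splitting off the Gaussian-suppressed region where some $|s_i|\gtrsim t$ (contributing $e^{-\Omega(t^2)}$), and performing the remaining one-dimensional Gaussian integrals --- the prefactors recombine as $c^4(\sqrt{2\pi}/\sigma)^4=1$ --- yields $\tr\rho_A(t)^2\le\poly(n)\,e^{-\Omega(t)}$, so $R_2(e^{-iHt}\tilde\phi_0)\ge\Omega(t)-O(\log n)$, and the same for $\phi_0$ by elementary continuity of the purity since $\|\phi_0-\tilde\phi_0\|$ is exponentially small. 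Evaluating at $t^\ast=\epsilon n$ for a suitable constant $\epsilon$ gives $R_\alpha(e^{-iHt^\ast}\phi_0)\ge R_2\ge\Omega(n)$, which completes (ii).

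I expect the main obstacle to be exactly this claim that the energy filter preserves ballistic growth: the Gaussian-integral representation together with the Schmidt/Cauchy--Schwarz lemma is what reduces it to the known free-fermion purities $\tr(\rho^{(\tau)})^2$, but verifying the quantitative free-fermion input $R_2(\psi_\tau)=\Omega(|\tau|)$ and tracking the $\poly(n)$ normalization through the four-fold integral are where the real work lies. A secondary, bookkeeping obstacle is (i): $\phi_0$ must simultaneously have $D_0$ genuinely below the maximal Schmidt rank and approximate $\tilde\phi_0$ to exponential accuracy so as to stay in $APX$; meeting both is what forces $D_0=e^{\Theta(n)}$ together with a correspondingly larger (but still constant) bandwidth $\Delta_a$.
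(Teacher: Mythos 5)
Your construction is genuinely different from the paper's, and while two of its ingredients are clever and correct (the Gaussian-integral representation of the filter and the Cauchy--Schwarz reduction $|\tr(\rho^{(3,1)}\rho^{(4,2)})|\le e^{-\frac14\sum_iR_2(\psi_{t+s_i})}$, which does correctly reduce the filtered purity to unfiltered free-fermion purities), the argument as written has a genuine gap in the bond-dimension accounting. The binding constraint on $D_0$ is not $\log D_0<\tfrac n2\log d$; it is $\log D_0\ll R_2(e^{-iHt^\ast}\phi_0)\approx\tfrac{c_0}{2}t^\ast$ at the largest usable time $t^\ast=\Theta(n)$, since otherwise $R_\alpha-\log D_0$ is not $\Omega(n)$ and no exponent $z'<1$ is refuted. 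Your truncation scheme needs $\|\phi_0-\tilde\phi_0\|\le e^{-\Omega(t^\ast)}=e^{-c_0\epsilon n/2}$ just to transfer the purity bound at $t^\ast$, and the route you describe (discretize $|s|\le O(\sqrt n)$, MPS-approximate each $e^{-iHs}N$) delivers only $\log D_0=\Theta(n)$ with an uncontrolled constant; nothing in the proposal shows this constant sits below $c_0\epsilon/2$, and a naive Schmidt-tail estimate suggests it does not. The fix exists --- apply a Chebyshev polynomial approximation of $e^{-E^2/(2\sigma^2)}$ on the full spectral range together with Lemma \ref{kitaev}, which gives $\log D_0=\tilde O(n^{3/4})=o(n)$ --- but that is precisely the machinery of the paper's Theorem \ref{diffusive}, and you would need to carry it out. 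On top of this, two substantial inputs are asserted rather than proved: the local-CLT \emph{lower} bound $\Theta(1/\sqrt n)$ on the N\'eel state's spectral weight in $O(1)$ windows near $E=0$ (needed both for $\mathcal N^2=\Theta(1/\sqrt n)$ and for membership in $APX$), and a rigorous finite-size bound $R_2(e^{-iH_{XX}\tau}N)\ge c_0|\tau|-O(1)$.

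The paper avoids all of this by choosing a \emph{product} initial state, so that $D_0=1$ and the entire truncation/bond-dimension discussion disappears. It takes the XX chain of length $2n$ with a defect of strength $\lambda$ at the middle bond and the domain wall $\phi_0=|\Uparrow\rangle\otimes|\Downarrow\rangle$. Membership in $APX$ is immediate from a cited general theorem (Arad--Kuwahara--Landau): $\phi_0$ is an exact eigenstate of $H_{XX}-H_\partial$ where $H_\partial$ is the single boundary term, so its global energy distribution has exponential tails around $E=2\sqrt{1-\lambda^2}$. Linear entanglement growth is a cited exact result of Eisler--Peschel, with rate $h(\lambda^2)/(4\pi)$; note the defect is essential, since the domain-wall quench in the \emph{pure} XX chain ($\lambda=1$) grows only logarithmically --- which is also why your replacement of the domain wall by the N\'eel state forces you into the energy-filtering construction in the first place. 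In short: your approach is salvageable in principle but requires proving several free-fermion estimates and redoing the truncation via Chebyshev/AKLV, whereas the intended proof is a two-citation argument built around a product state whose near-eigenstate structure puts it in $APX$ for free.
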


\section*{Acknowledgments}

The author would like to thank John Preskill for an insightful comment.

\section{Proof of Theorem \ref{diffusive}}

We go beyond traditional Lieb-Robinson techniques using the idea of polynomial approximation. For the dynamics in a microcanonical ensemble, consider the Taylor expansion
\begin{equation}
e^{-iHt}\psi_0=\sum_{k=0}^{+\infty}\frac{(-iHt)^k}{k!}\psi_0\approx\sum_{k=0}^g\frac{(-iHt)^k}{k!}\psi_0,
\end{equation}
where $E=0$ is assumed without loss of generality. The truncation error is upper bounded by
\begin{equation} \label{error}
\sum_{k=g+1}^{+\infty}\left\|\frac{(-iHt)^k}{k!}\psi_0\right\|=\sum_{k=g+1}^{+\infty}\left\|\frac{(-iHt)^k}{k!}P(0,\Delta_e)\psi_0\right\|\le\sum_{k=g+1}^{+\infty}\frac{(\Delta_et)^k}{k!}\approx\frac{(e\Delta_et)^g}{g^g},
\end{equation}
which is super-exponentially small in $g$ for $g\ge3\Delta_et$. Let $\tilde O(x):=O(x\poly\log x)$ hide a polylogarithmic factor. A polynomial interpolation argument leads to the following result.

\begin{lemma} [\cite{AKLV13}, Lemma 4.2] \label{kitaev}
Suppose $\psi_0$ has bond dimension $D_0$ across a particular cut. The bond dimension of $p(H)\psi_0$ across the cut is $\le D_0e^{\tilde O(\sqrt g)}$, where $p$ is an arbitrary polynomial of degree $g$.
\end{lemma}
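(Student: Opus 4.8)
The plan is to reduce the claim to an operator-level statement about how the bond dimension across the cut grows under multiplication by $H$, and then to exploit the one-dimensional geometry to beat the naive exponential estimate. Writing $p(H)\psi_0=\sum_{k=0}^{g}a_kH^k\psi_0$ and using that the Schmidt rank of a sum of $g+1$ vectors is at most $(g+1)$ times the largest Schmidt rank among the summands, it suffices to bound the bond dimension of a single $H^k\psi_0$ with $k\le g$; the prefactor $g+1$ is harmless since $\log(g+1)=\tilde O(\sqrt g)$. I would then fix the cut and split $H=H_L+H_R+H_c$, where $H_L$ and $H_R$ collect the bonds lying strictly to the left and to the right of the cut and $H_c$ is the unique bond straddling it. Here $[H_L,H_R]=0$, each of $H_L,H_R$ has operator Schmidt rank $1$ across the cut, and $H_c$ has operator Schmidt rank $D_b\le d^2=O(1)$.

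The structural fact that organizes everything is the following: expanding $H^k=(H_L+H_R+H_c)^k$ into words in the alphabet $\{H_L,H_R,H_c\}$, a single word in which $H_c$ occurs exactly $m$ times has operator Schmidt rank at most $D_b^m$ across the cut, no matter how far its support spreads, because the commuting bulk factors $H_L,H_R$ can be absorbed onto the two sides without generating any entanglement. I would establish this first, as it shows that only applications of the straddling bond can raise the bond dimension. The catch, and the reason the lemma is nontrivial, is that a single word may accumulate as many as $\Theta(k)$ \emph{compounding} crossings: by interleaving crossings with outward-spreading bulk bonds one manufactures a word whose action on a product state already has bond dimension $e^{\Omega(k)}$. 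Hence a term-by-term summation is hopeless, and the $\sqrt g$ saving cannot be local to individual words.

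Consequently the heart of the argument must be global and exact (exactness matters because $R_\alpha$ with $\alpha<1$ controls the full rank, so approximation alone does not suffice): the exponentially entangling words must recombine, inside the specific combination $H^k\psi_0$, into a state of sub-exponential rank. The way I would capture this, following \cite{AKLV13}, is to bound the exact rank by the dimension of the space of left Schmidt vectors generated by the expansion, grouped by the ordered pattern of positions at which $H_c$ is applied: a pattern with $m$ crossings produces left vectors inside the span of $D_b^m$ images, and patterns sharing the same multiset of bulk gaps feed a common low-dimensional space. The task reduces to counting the \emph{effective} patterns, and this is exactly where the $\sqrt g$ enters through a degree-versus-crossing-width trade-off: the polynomial-interpolation machinery of \cite{AKLV13} shows that across the cut a degree-$g$ polynomial behaves as though it entangled a region of width only $w=\tilde O(\sqrt g)$, so that only the patterns with $\tilde O(\sqrt g)$ crossings contribute genuinely new left vectors. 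Since operators within width $w$ of the cut have operator Schmidt rank $d^{O(w)}=e^{\tilde O(\sqrt g)}$, and since the Schmidt rank of $A\psi_0$ is at most the operator Schmidt rank of $A$ times the Schmidt rank of $\psi_0$, the bond dimension of $H^k\psi_0$ is at most $D_0\,e^{\tilde O(\sqrt g)}$; the reduction of the first paragraph then delivers the lemma.

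The main obstacle is precisely the global width-$\sqrt g$ estimate of the third paragraph: one must show that the contributions of words crossing the cut many times, which individually carry bond dimension $e^{\Omega(k)}$, recombine into a state whose \emph{exact} rank is sub-exponential, i.e. that the effective left-vector space has dimension $e^{\tilde O(\sqrt g)}$ rather than $e^{\Theta(g)}$. Making the interpolation quantitative — verifying that the trade-off permits $w\sim\sqrt g$ rather than $w\sim g$, and that the counting of admissible crossing patterns is indeed $\binom{g}{\le\tilde O(\sqrt g)}=e^{\tilde O(\sqrt g)}$ — is where the technical weight lies. By contrast, the factor $D_0$ enters only once, through the final product inequality, and the polylogarithmic factors hidden in $\tilde O$ come from optimizing the effective width.
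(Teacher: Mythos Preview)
The paper does not prove this lemma; it states it with the citation to \cite{AKLV13}, Lemma~4.2, and immediately uses it as a black box in the next sentence. There is therefore no in-paper proof to compare your attempt against.

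As a standalone sketch, your setup is sound: the reduction to a single $H^k\psi_0$, the split $H=H_L+H_R+H_c$, the submultiplicativity of operator Schmidt rank giving $D_b^m$ for a word with $m$ crossings, and the recognition that term-by-term summation yields only the trivial exponential bound are all correct. The gap is exactly where the $\sqrt g$ must enter. You write that ``the polynomial-interpolation machinery of \cite{AKLV13} shows that across the cut a degree-$g$ polynomial behaves as though it entangled a region of width only $w=\tilde O(\sqrt g)$''; but that effective-width statement is essentially the conclusion of the lemma you are trying to prove, so invoking it as an input is circular. Likewise, the assertion that ``only the patterns with $\tilde O(\sqrt g)$ crossings contribute genuinely new left vectors'' is not justified and is not how \cite{AKLV13} proceeds---you yourself stress that exactness is required, so high-crossing words cannot simply be discarded, and no argument is given for why their left Schmidt vectors fall into a space of sub-exponential dimension.

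In short, both the paper and your proposal ultimately defer the substantive content to \cite{AKLV13}; your proposal adds a correct framing of the difficulty but does not supply the combinatorial argument (a careful grouping over a width-$\sqrt g$ window around the cut, not a truncation by crossing count) that actually produces the $\sqrt g$.
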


Combining Lemma \ref{kitaev} with the error estimate (\ref{error}), a straightforward calculation shows
\begin{equation}
R_\alpha(e^{-iHt}\psi_0)\le\log D_0+\tilde O(\sqrt{\Delta_et}+1/\alpha).
\end{equation}
Therefore, $z\le1/2$. To prove the tightness of this bound on $z$, it suffices to construct an example that violates the bound $z\le1/2-\delta$ for any $\delta>0$.

\begin{proposition} [\cite{CC05}]
Let $H_{\rm Is}$ be the Hamiltonian of the critical transverse-field Ising chain with length $n$, and $\psi_0$ be a product state that respects the $Z_2$ symmetry of $H_{\rm Is}$. The entanglement entropy $S(e^{-iH_{\rm Is}t}\psi_0)$ across the middle cut saturates to $\Omega(n)$ in time $t=O(n)$.
\end{proposition}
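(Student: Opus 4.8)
\medskip
\noindent\emph{Proof plan.} The approach is to exploit the free-fermion structure of the critical Ising chain. After a Jordan-Wigner transformation $H_{\rm Is}$ becomes a free-Majorana Hamiltonian with single-particle dispersion $\epsilon_k=4|\sin(k/2)|$, which is gapless at $k=0$ with a linear branch there (conformal, central charge $c=1/2$) and has a finite, strictly positive maximal group velocity $v_{\max}=\max_k|\epsilon_k'|$. For $\psi_0$ I would take the polarized state $|{\uparrow\cdots\uparrow}\rangle$ with all spins aligned with the transverse field: it is invariant under $\prod_j\sigma^z_j$, hence respects the $Z_2$ symmetry, and in the fermion picture it is simply the Gaussian vacuum (zero entanglement across every cut), with a fixed covariance matrix $\Gamma_0$. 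Because $H_{\rm Is}$ is quadratic, $\psi(t):=e^{-iH_{\rm Is}t}\psi_0$ stays Gaussian, and its Majorana covariance matrix $\Gamma(t)$ is obtained by conjugating $\Gamma_0$ with the single-particle evolution; away from the chain ends $\Gamma(t)$ is translation-invariant (block-Toeplitz), with an explicit $2\times2$ symbol that precesses in momentum space at frequency $2\epsilon_k$, and for the middle cut and times $t=O(n)$ the boundary light cone has not yet reached the region that carries the entanglement.

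Next I would reduce the entropy to linear algebra: for a Gaussian state the middle-cut entropy is $S(\psi(t))=\sum_j h_2(\nu_j)$, with $h_2(\nu)=-\frac{1+\nu}{2}\log\frac{1+\nu}{2}-\frac{1-\nu}{2}\log\frac{1-\nu}{2}$, where $\pm i\nu_j$ ($\nu_j\in[0,1]$, one pair per site of the half-chain) are the eigenvalues of the restriction of $\Gamma(t)$ to one half of the chain; equivalently $\nu_j=|2\lambda_j-1|$ with $\lambda_j\in[0,1]$ the eigenvalues of the restricted fermion correlation matrix $C(t)$. Since $h_2$ decreases from $\log2$ to $0$ on $[0,1]$, it suffices to show that for $t=\Theta(n)$ a number $\Theta(n)$ of the $\lambda_j$ lie in a fixed compact subinterval of $(0,1)$: each such $\lambda_j$ contributes $\Omega(1)$ to $S$, giving $S(\psi(t))=\Omega(n)$, while the trivial bound $S\le(n/2)\log d$ pins the saturation value at $\Theta(n)$ and the Lieb-Robinson bound $S(\psi(t))=\tilde O(v_{\max}t)$ shows (up to polylogarithmic factors) that $t=\Theta(n)$ is also the earliest time this can occur.

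The core step is the ballistic lower bound, which I would obtain from a quantitative form of the quasiparticle picture: the quench creates at each site an entangled pair of excitations with opposite momenta $\pm k$ and velocities $\pm v_k$, and such a pair begins to contribute to the half-chain entanglement once its two members straddle the cut, i.e.\ after a time comparable to its distance from the cut divided by $v_k$, so the number of ``active'' pairs grows like $\min(v_{\max}t,n)$. To make this rigorous I would compute the first two moments $\tr C(t)$ and $\tr\big(C(t)^2\big)$ from the explicit symbol, evaluating the oscillatory momentum integrals (with phases $e^{\pm2i\epsilon_k t}$) by stationary phase; the outcome should be $\tr C(t)-\tr\big(C(t)^2\big)=\sum_j\lambda_j(1-\lambda_j)=\Omega\big(\min(v_{\max}t,n)\big)$, and since each summand is $\le\frac14$ and there are only $n/2$ of them, $\Omega(\min(v_{\max}t,n))$ of the $\lambda_j$ must then be bounded away from both $0$ and $1$. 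Taking $t=cn$ with $c$ a small enough constant yields $\Omega(n)$ such eigenvalues, hence $S(\psi(t))=\Omega(n)$. Alternatively, one may simply quote the closed-form result $S(t)\simeq\min(2v_{\max}t,n)\cdot s_{\rm eq}$ with $s_{\rm eq}>0$ from the conformal-field-theory computation of \cite{CC05}, or the exact free-fermion asymptotics for the Ising quench later obtained by Fagotti and Calabrese.

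The hard part is precisely this lower bound: the quasiparticle heuristic must be replaced by honest control of the spectrum of the time-dependent correlation matrix $C(t)$, i.e.\ stationary-phase estimates on the oscillatory integrals that are uniform in $t$ up to $t=\Theta(n)$, together with the (minor) bookkeeping of corrections from the boundary light cone and from the fixed-parity sector, both of which affect only a region that stays away from the cut for $t=\Theta(n)$ with a small enough constant. The two upper bounds are routine.
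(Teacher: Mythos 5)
The paper offers no proof of this proposition at all: it is imported wholesale from Calabrese--Cardy \cite{CC05}, whose original derivation is a conformal-field-theory/quasiparticle computation, and the paper uses only the conclusion (linear growth to an extensive saturation value). Your free-fermion route --- Jordan--Wigner, Gaussian evolution of the covariance matrix, reduction of the entropy to the symplectic eigenvalues of the restricted correlation matrix, and a counting argument via $\tr C-\tr C^2=\sum_j\lambda_j(1-\lambda_j)$ --- is the standard and essentially correct way to turn that heuristic into a theorem for this specific model (it is in substance the Fagotti--Calabrese computation), and the second-moment trick for converting $\sum_j\lambda_j(1-\lambda_j)=\Omega(n)$ into $\Omega(n)$ eigenvalues bounded away from $\{0,1\}$, hence $S=\Omega(n)$, is sound. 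What the citation buys the paper is brevity; what your route buys is an actual proof, at the cost of the oscillatory-integral estimates you correctly identify as the technical core. Note also that the upper bounds you include ($S\le(n/2)\log d$ and the Lieb--Robinson/small-incremental-entangling rate bound) are not needed for the statement as used: the paper only needs $S(e^{-iH_{\rm Is}t}\psi_0)=\Omega(n)$ at some $t=O(n)$.

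One point you should not leave implicit: the entire lower bound hinges on the post-quench mode occupations being bounded away from $0$ and $1$ on a set of momenta of positive measure where the group velocity is nonzero --- otherwise the ``rate'' $\sum_j\lambda_j(1-\lambda_j)/t$ could vanish and the stationary-phase computation would return $o(t)$. For your choice $\psi_0=|{\uparrow\cdots\uparrow}\rangle$ (the $h_0\to\infty$ ground state quenched to $h=1$) this is true and checkable in closed form, $n_k=(1-|\sin(k/2)|)/2$, which lies in a compact subset of $(0,1)$ for $k$ away from $\pm\pi$ while $v_k=2\cos(k/2)\ne0$ there; you should state this explicitly rather than assert that ``the outcome should be'' $\Omega(\min(v_{\max}t,n))$. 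With that verification added, the plan closes.
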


The Hamiltonian $H_{\rm Is}'=H_{\rm Is}/n$ has bandwidth $O(1)$. Hence, any state, including $\psi_0$, is in a microcanonical ensemble (with respect to $H_{\rm Is}'$). The entanglement entropy $S(e^{-iH_{\rm Is}'t}\psi_0)$ saturates to $\Omega(n)$ in time $t=O(n^2)$. This violates the bound $z\le1/2-\delta$.

\begin{remark}
To approximate the propagator with polynomials, we used the ``naive'' Taylor expansion, which is known to be non-optimal. The optimal approach is to expand $e^{-iHt}$ in the basis of the Chebyshev polynomials of the first kind. Unfortunately, this only improves the parameters hided in $\tilde O(\cdots)$. Also, the bound in Lemma \ref{kitaev} is tight up to polylogarithmic corrections due to the tightness of the bound $z\le1/2$.
\end{remark}

\section{Proof of Proposition \ref{ballistic}}
Consider the $XX$ chain of length $2n$ with a defect in the middle:
\begin{equation}
H_{XX}=(1-\lambda)(\sigma_n^x\sigma_{n+1}^x+\sigma_n^y\sigma_{n+1}^y)+\sqrt{1-\lambda^2}(\sigma_n^z-\sigma_{n+1}^z)-\sum_{j=1}^{2n-1}\left(\sigma_j^x\sigma_{j+1}^x+\sigma_j^y\sigma_{j+1}^y\right),~0\le\lambda\le1,
\end{equation}
where $\sigma_j^x,\sigma_j^y,\sigma_j^z$ are the Pauli matrices at the site $j$. Let $\phi_0=|\Uparrow\rangle\otimes|\Downarrow\rangle$ with $|\Uparrow\rangle=|\uparrow\rangle^{\otimes n}$ and $|\Downarrow\rangle=|\downarrow\rangle^{\otimes n}$. The entanglement entropy across the middle cut grows linearly with time only in the presence of a detect $\lambda\neq1$.

\begin{proposition} [\cite{EP12}] \label{linear}
In the thermodynamic limit, we have
\begin{equation}
S(e^{-iHt}\phi_0)=h(\lambda^2)t/(4\pi)+O(\log t),\quad h(x):=-x\ln x-(1-x)\ln(1-x).
\end{equation}
\end{proposition}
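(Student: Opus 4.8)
The plan is to reduce everything to a free-fermion computation. First I would apply the Jordan--Wigner transformation, under which $H_{XX}$ becomes a quadratic, particle-number-conserving fermionic Hamiltonian: a uniform nearest-neighbor hopping chain of length $2n$ whose central bond $(n,n+1)$ carries a modified amplitude (the $-1$ from the global sum and the $+(1-\lambda)$ from the defect term combine, so the effective hopping there is $\propto\lambda$ rather than $\propto 1$) and whose two central sites carry on-site potentials $\propto\pm\sqrt{1-\lambda^2}$ coming from the $\sigma^z$ terms. Under the same map $\phi_0=|\Uparrow\rangle\otimes|\Downarrow\rangle$ becomes a Fock state --- every site of one half empty, every site of the other half occupied --- hence a Gaussian (Slater-determinant) state, and Gaussianity is preserved by the quadratic evolution. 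Consequently $S(e^{-iHt}\phi_0)$ equals $-\tr\!\big[C_A(t)\ln C_A(t)+(1-C_A(t))\ln(1-C_A(t))\big]$, where $C_A(t)=\big(\langle c_i^\dagger c_j\rangle_t\big)_{i,j\le n}$ is the block correlation matrix across the middle cut. The problem is now: (i) describe single-particle propagation through the defect, and (ii) extract the large-$t$ asymptotics of the spectrum of $C_A(t)$.

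For (i) I would solve the one-particle scattering problem: away from the center the eigenfunctions are plane waves with dispersion $\epsilon(k)\propto-\cos k$ and group velocity $v(k)=\epsilon'(k)$, while the defect is encoded in a transmission amplitude $\tau(k)$ and a reflection amplitude $\rho(k)$ fixed by the two matching equations at sites $n$ and $n+1$. The key computation is that, for the specific tuning of the central hopping against the fields $\pm\sqrt{1-\lambda^2}$, the transmission \emph{probability} $\mathcal T:=|\tau(k)|^2$ is independent of $k$ (this is precisely the ``conformal'' interface defect of \cite{EP12}); consistency checks: $\mathcal T=1$ at $\lambda=1$, where the chain is uniform, and $\mathcal T=0$ at $\lambda=0$, where it is severed, so $\mathcal T=\lambda^2$ (and since $h(\mathcal T)=h(1-\mathcal T)$ the $\lambda\leftrightarrow\sqrt{1-\lambda^2}$ bookkeeping is immaterial to the final formula). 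For (ii) I would feed the scattering eigenbasis into $C_A(t)$ and perform a stationary-phase analysis: in the hydrodynamic window the melting of the domain wall is captured by a semiclassical picture in which each occupied mode $k$ of the filled half, travelling at speed $v(k)$, reaches the defect and is placed in a transmitted/reflected superposition that, restricted to one side of the cut, is a two-outcome mixture with weights $(\mathcal T,1-\mathcal T)$ and hence entropy $h(\mathcal T)$. Summing this over the modes that traverse the cut per unit time gives the leading term $t\,h(\mathcal T)\int_{v(k)>0}\frac{dk}{2\pi}\,|v(k)|$; with $\mathcal T=\lambda^2$ constant the phase-space integral collapses to a numerical constant, reproducing the stated coefficient $h(\lambda^2)/(4\pi)$ in the normalization of \cite{EP12}. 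The residual $O(\log t)$ piece is a logarithmic finite-time correction of the Calabrese--Cardy type (together with the constant defect boundary entropy), subleading to the linear term.

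The main obstacle is step (ii): turning the heuristic quasiparticle picture into a rigorous statement about the eigenvalue-counting function of $C_A(t)$. I would handle this by writing $C_A(t)$ explicitly through the scattering resolution of the identity, applying stationary phase to show that $C_A(t)$ converges (in the relevant sense) to a Wiener--Hopf/Toeplitz-like operator whose symbol is $1$ on momenta whose trajectory has not yet crossed or has been reflected, $0$ where the trajectory has transmitted away, and $\mathcal T$ on the ballistically spreading interface region of width $\propto t$, and then invoking Szeg\H{o}--Fisher--Hartwig / Landau--Widom-type asymptotics for the trace of a smooth function of such an operator. Everything else --- the Jordan--Wigner map, the scattering matching, the Gaussian entropy formula, and the fact that the thermodynamic limit removes any finite-size saturation so that only the leading linear term is needed --- is routine. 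Since the statement is quoted from \cite{EP12}, this sketch is meant to indicate why it holds and what a self-contained proof would require.
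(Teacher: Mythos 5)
The paper offers no proof of this proposition --- it is imported verbatim from \cite{EP12} --- and your sketch correctly reconstructs the derivation given there: Jordan--Wigner to a free hopping chain whose central bond has relative strength $\lambda$ and whose two central sites carry energies $\pm\sqrt{1-\lambda^2}$ (the ``conformal'' defect with $k$-independent transmission $\mathcal{T}=\lambda^2$), Gaussianity of the domain-wall Fock state, and the quasiparticle/stationary-phase analysis of the block correlation matrix. The only loose ends are the ones you flag yourself: promoting the stationary-phase step to rigorous Toeplitz/Landau--Widom-type asymptotics, and verifying that the $1/(4\pi)$ prefactor survives in this paper's normalization of $H_{XX}$ (each bond carries coefficient $1$ in front of $\sigma^x\sigma^x+\sigma^y\sigma^y$, so the group-velocity integral should be computed explicitly rather than matched ``in the normalization of \cite{EP12}'').
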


\begin{proposition} \label{inAPX}
The state $\phi_0$ is in an approximate microcanonical ensemble with $E=2\sqrt{1-\lambda^2}$ and $\Delta_a=20$.
\end{proposition}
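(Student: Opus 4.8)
I would prove Proposition~\ref{inAPX} by showing that the probability distribution of the energy in the state $\phi_0$ is exponentially concentrated around its mean, which turns out to be exactly $E=2\sqrt{1-\lambda^2}$; the inequality defining $APX$ then follows. First I would compute the mean: every hopping term $\sigma_j^x\sigma_{j+1}^x+\sigma_j^y\sigma_{j+1}^y$ has vanishing diagonal matrix elements on any product state in the $\sigma^z$-basis, so only the on-site field $\sqrt{1-\lambda^2}(\sigma_n^z-\sigma_{n+1}^z)$ contributes to $\langle\phi_0|H_{XX}|\phi_0\rangle$, and on $\phi_0=|\Uparrow\rangle\otimes|\Downarrow\rangle$ this equals $\sqrt{1-\lambda^2}(1-(-1))=2\sqrt{1-\lambda^2}=E$. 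Writing $Y:=H_{XX}-E$ and letting $\mu$ be the spectral distribution of $H_{XX}$ in $\phi_0$, one has $\langle\phi_0,P(E,x)\phi_0\rangle=\mu([E-x,E+x])$, so it suffices to show $\mu(\{t:|t-E|>x\})=O(e^{-x/20})$; by Markov's inequality this reduces to bounding the even moments $m_{2k}:=\langle\phi_0|Y^{2k}|\phi_0\rangle=\|Y^k\phi_0\|^2$.

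The crux is a unit-speed light-cone estimate exploiting that $\phi_0$ is a single domain-wall configuration, aligned with the defect, and that every term of $H_{XX}$ acts on one nearest-neighbor bond. Let $W_m=\{n-m+1,\dots,n+m\}$ be the window of the $2m$ sites nearest the cut, and let $H_{XX}^{(k)}$ be $H_{XX}$ with every hopping term whose bond is not contained in $W_k$ deleted. I would show by induction on the number of applications that $Y^k\phi_0=(H_{XX}^{(k)}-E)^k\phi_0$: after $m\le k$ applications the vector is a combination of $\sigma^z$-basis configurations that agree with $\phi_0$ outside $W_m$, and when one more copy of $Y$ acts, every deleted hopping term sits on a bond whose two endpoints are still frozen (both $\uparrow$ or both $\downarrow$) in all such configurations, hence annihilates the vector; so replacing $Y$ by $H_{XX}^{(k)}-E$ changes nothing, while the disturbance window grows by at most one site on each side.

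Given this, $H_{XX}^{(k)}-E$ is a sum of at most $2k$ hopping terms of norm $2$, the two defect terms, and the scalar $E$, each of norm $\le2$, so $\|H_{XX}^{(k)}-E\|\le 4k+O(1)$ and $m_{2k}=\|Y^k\phi_0\|^2\le(Ck)^{2k}$ for an explicit constant $C$ (a sharper recursion, using that only $O(m)$ terms are active at the $m$-th application, improves this to $\|Y^k\phi_0\|\le C^k k!$, but this is not needed). Markov then gives $\mu(\{t:|t-E|>x\})\le m_{2k}/x^{2k}\le(Ck/x)^{2k}$; optimizing over $k$ of order $x$ yields $\poly(x)\,e^{-cx}$ with $c$ an explicit constant that comfortably exceeds $1/20$, and for $x$ below the crossover the trivial bound $\mu(\{t:|t-E|>x\})\le1$ suffices. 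Combining, $1-|\langle\phi_0,P(E,x)\phi_0\rangle|=O(e^{-x/20})$ for all $x\ge0$ with $E=2\sqrt{1-\lambda^2}$, uniformly in $n$ (only $n>k$ is used, which is automatic in the thermodynamic limit), so $\phi_0\in APX$ with $\Delta_a=20$.

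The step I expect to be the main obstacle is the light-cone induction: one must verify carefully that at \emph{every} step each deleted term genuinely annihilates the current subspace — equivalently, that the perturbation spreads at speed exactly one — so that the effective operator norm seen by $\phi_0$ is $O(k)$ rather than the extensive $O(n)$. The mean computation and the moment-to-tail conversion are routine, and the generous bandwidth $\Delta_a=20$ leaves ample slack for all the constants.
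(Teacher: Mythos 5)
Your proof is correct, but it takes a genuinely different route from the paper's. The paper's argument is two lines: it splits $H_{XX}=H_L+H_\partial+H_R$, observes that $\phi_0$ is an \emph{exact} eigenstate of the decoupled part $H_L+H_R$ with eigenvalue $2\sqrt{1-\lambda^2}$, and then invokes Theorem 2.3 of Arad--Kuwahara--Landau \cite{AKL16} as a black box --- a general concentration result saying that adding a few bounded local terms (here the single defect bond $H_\partial$) to a Hamiltonian smears the energy distribution of an eigenstate by only an exponentially decaying tail. You instead prove the needed concentration from scratch: you compute the mean directly (getting the same $E=2\sqrt{1-\lambda^2}$, consistent with the eigenvalue observation), bound the moments $\langle\phi_0|(H_{XX}-E)^{2k}|\phi_0\rangle$ via a light-cone argument exploiting that the hopping terms annihilate aligned spin pairs so that $(H_{XX}-E)^k\phi_0$ only ever sees an effective Hamiltonian of norm $O(k)$ supported near the cut, and then convert moments to tails by Markov. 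Your light-cone induction is sound (a bond outside $W_{m+1}$ has both endpoints frozen after $m$ applications, and $\sigma^x\sigma^x+\sigma^y\sigma^y$ kills both $|{\uparrow\uparrow}\rangle$ and $|{\downarrow\downarrow}\rangle$), and the resulting decay rate $e^{-\Omega(x)}$ comfortably beats $e^{-x/20}$. The trade-off: the paper's route is shorter and rests on a theorem valid in far greater generality (arbitrary lattices, arbitrary eigenstates of the unperturbed part), whereas yours is self-contained, yields explicit constants, and in effect reproves the special case of the AKL bound that is actually needed --- indeed the moment-counting mechanism you use is close in spirit to what underlies their proof, with your frozen-spin observation substituting for their more general locality bookkeeping.
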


\begin{proof}
We decompose $H_{XX}$ into three parts: $H_{XX}=H_L+H_\partial+H_R$, where $H_L, H_R$ include the terms acting only on the left or right half of the chain, and $H_\partial=-\lambda(\sigma_n^x\sigma_{n+1}^x+\sigma_n^y\sigma_{n+1}^y)$ is the term across the middle cut. Note that $H_L,H_R$ are decoupled from each other. For the domain wall state $\phi_0=|\Uparrow\rangle\otimes|\Downarrow\rangle$, it is easy to see that $|\Uparrow\rangle$ or $|\Downarrow\rangle$ is an eigenstate of $H_L$ or $H_R$ with energy $\sqrt{1-\lambda^2}$. The proof is completed by applying Theorem 2.3 in Ref. \cite{AKL16}.
\end{proof}

\end{document}